\definecolor{dark-red}{rgb}{0.4,0.15,0.15}
\definecolor{dark-blue}{rgb}{0,0,0.45}
\numberwithin{equation}{section}
\newcounter{constnum}
\def\mod{\text{ mod }} 
\renewcommand{\qed}{\nobreak \ifvmode \relax \else
      \ifdim\lastskip<1.5em \hskip-\lastskip
      \hskip1.5em plus0em minus0.5em \fi \nobreak
      \vrule height0.75em width0.5em depth0.25em\fi}
\definecolor{gold}{rgb}{0.85,0.66,0.0}
\newtheorem{theorem}{Theorem}[section]
\theoremstyle{definition}
\newtheorem{example}[theorem]{Example}
\newtheorem{definition}[theorem]{Definition}
\title{Families of Multidimensional Arrays with Good Autocorrelation and 
Asymptotically Optimal Cross-correlation}
\author{Sam Blake}
\date{18 July 2019}
\begin{document}

\maketitle

\begin{abstract}
We introduce a construction for families of $2n$-dimensional arrays with 
asymptotically optimal pairwise cross-correlation. These arrays are constructed 
using a circulant array of $n$-dimensional Legendre arrays. We also introduce 
an application of these higher-dimensional arrays to high-capacity digital 
watermarking of images and video. 
\end{abstract}

\section{Background}

Biphase sequence families with low periodic off peak autocorrelation and low cross-correlation are 
highly sought after for CDMA wireless communications. This has been an active research area since 
the landmark paper of Gold in 1967\cite{Gold1967}, and numerous constructions of such families are 
known. \\

The concept of binary two-dimensional doubly periodic arrays with optimal off-peak 
autocorrelation was introduced by Gordon in 1966\cite{Gordon1966}. Such arrays are two-dimensional 
analogues of m-sequences. They are either solitary, or have very small family sizes called maximal 
connected sets\cite{Tirkel1994}, and higher-dimensional families are rare\cite{Green1985}. Perfect 
binary arrays in two and higher-dimensions have also been studied\cite{Jedwab1994}, but they are 
solitary, and most have unfavourable aspect ratios for applications. In 1988, L\"{u}ke surveyed 
existing constructions of two-dimensional arrays\cite{Luke1988}. In 1989, the Legendre sequences 
were generalised to two and higher-dimensional analogues\cite{Luke1989}\cite{Bomer1993}. \\

Two and three-dimensional arrays find applications in optics, where they are used for coded 
aperture imaging, or in structured light, where they are used for image alignment or 
registration. The first families of two-dimensional arrays were constructed in 1991 by Green 
et al\cite{Green1991}, where the small Kasami and No--Kumar sequences were interpreted as arrays. \\

In 1997, Tirkel et al, motivated by finding two-dimensional patterns for use as spread spectrum 
watermarks, constructed families of arrays. The arrays were of size $p \times p$, where $p$ is 
a prime number. Later this was extended to $p \times p-1$, $p \times p+1$, and 
$p-1 \times p+1$ \cite{Leukhin2013} and to higher dimensions\cite{Blake2014}. \\

The periodic cross-correlation of two $N$-dimensional arrays, \textbf{A} and
\textbf{B}, both of size $l_0 \times l_1 \times\cdots\times l_{N-1}$, for shift $s_0, s_1, \cdots, s_{N-1}$ is defined as 
$$\theta_{\textbf{A}, \textbf{B}}\left(s_0, s_1, \cdots,
  s_{N-1}\right) = \sum_{i_0 = 0}^{l_0 - 1}\sum_{i_1 = 0}^{l_1 - 1}
\cdots \sum_{i_{N-1} = 0}^{l_{N-1} - 1} A_{i_0,i_1, \cdots, i_{N-1}}
B_{i_0 + s_0, i_1+s_1, \cdots, i_{N-1} + s_{N-1}}^*.$$ Similarly, the
periodic autocorrelation of a $N$-dimensional array for shift $s_0,
s_1, \cdots, s_{N-1}$ is given by $\theta_{\textbf{A}}\left(s_0,
s_1, \cdots, s_{N-1}\right) = \theta_{\textbf{A},\textbf{A}}\left(s_0,
s_1, \cdots, s_{N-1}\right)$. $\theta_{\textbf{A}}(s_0,s_1, \cdots, s_{N-1})$ is called an
\textit{off-peak} autocorrelation if not all $s_i = 0 \mod l_i$. We denote the 
array of all autocorrelations and cross-correlations for all shifts as 
$\theta_{\textbf{A}}$ and $\theta_{\textbf{A},\textbf{B}}$ respectively. \\

Sequences with flat periodic autocorrelation which used properties of the Legendre symbol were first 
discovered by Lerner in 1958\cite{Lerner1958}, where the sequences were termed \textit{Legendre sequences}. 
In the same year, Zierler showed the Legendre sequences possessed \textit{flat} periodic 
autocorrelation\cite{Zierler1958}. \\

\begin{definition}
(Legendre Sequences)\cite{Zierler1958}
Let \textbf{s} be a sequence of length $p$, for $p$ an odd prime. Then 
\[
s_k = 
\begin{cases}
a & \text{if $k = 0$}\\
1 & \text{if $k$ is a quadratic residue mod $p$}\\
-1 & \text{otherwise}
\end{cases}
\]
for $0 \leq k < p$.
\end{definition}

\bigskip

If $a = 0$, then all off-peak autocorrelations of the Legendre sequences equal to -1. If $a = \pm 1$, 
then all off-peak autocorrelations of the Legendre sequences $-1$ when $p = 4k-1$, and $1$ and $-3$ 
otherwise. \\

\begin{example}
We construct the length 17 Legendre sequence and compute its periodic autocorrelations.

$$\textbf{s} = [0, 1, 1, -1, 1, -1, -1, -1, 1, 1, -1, -1, -1, 1, -1, 1, 1],$$

$$\theta_{\textbf{s}} = [16, -1, -1, -1, -1, -1, -1, -1, -1, -1, -1, -1, -1, -1, -1, -1, -1].$$
\end{example}

In 1990, B\"{o}mer and Antweiler introduced a construction of two-dimensional Legendre arrays of sizes $p \times p$ where 
$p$ is an odd prime\cite{Bomer1990}. These arrays possessed flat autocorrelation, with all off-peak autocorrelations 
equal to -1. \\

\begin{definition}
(Legendre arrays)\cite{Bomer1990} Let $\alpha$ be a primitive element in $\text{GF}\left(p^2\right)$, then every power of 
$\alpha$ can be expressed as $$\alpha^i = m\,\alpha + n$$ where $0 \leq i < p^2 - 1$ and $(m,n) \neq (0,0)$. Then the 
two-dimensional Legendre array, \textbf{A}, is given by 
\[
A_{m,n} = 
\begin{cases}
$a$  & \text{if $(m,n) = (0,0)$}\\
$+1$ & \text{if $\alpha^{2r} = m\, \alpha + n$}\\
$-1$ & \text{if $\alpha^{2r+1} = m\, \alpha + n$}
\end{cases}
\]
\end{definition}

\bigskip

If $a = 0$, then all the off-peak periodic autocorrelations are $-1$, and if $a = \pm 1$ then all the 
off-peak periodic autocorrelations are $\pm 1$ and $\pm 3$.

\begin{example}
Let $p=5$, then $x^2 + 4x + 2$ is a primitive polynomial in $\text{GF}\left(p^2\right)$, and the $5 \times 5$ 
Legendre array, \textbf{A}, is given by 
\[
\left[
\begin{array}{ccccc}
 0 & 1 & 1 & 1 & 1 \\
 -1 & -1 & -1 & 1 & 1 \\
 -1 & 1 & -1 & 1 & -1 \\
 -1 & -1 & 1 & -1 & 1 \\
 -1 & 1 & 1 & -1 & -1 \\
\end{array}
\right]
\]
where $a=0$. 
\end{example}

This construction readily generalises to $n$-dimensions by using a primitive polynomial of degree $n$ 
in $\text{GF}\left(p^n\right)$. 

\begin{example}
Consider the construction of a 4D Legendre array. Let $p=3$, then $x^4 + 2x^3 + 2$ is a primitive 
polynomial in $\text{GF}\left(p^4\right)$, and the $3 \times 3 \times 3 \times 3$ Legendre array, \textbf{A}, 
is given by 
{\small
$$\textbf{A} = \left[
\begin{array}{ccc}
 \left[
\begin{array}{ccc}
 0 & 1 & 1 \\
 -1 & -1 & 1 \\
 -1 & 1 & -1 \\
\end{array}
\right] & \left[
\begin{array}{ccc}
 1 & 1 & -1 \\
 1 & 1 & -1 \\
 -1 & 1 & 1 \\
\end{array}
\right] & \left[
\begin{array}{ccc}
 1 & -1 & 1 \\
 -1 & 1 & 1 \\
 1 & -1 & 1 \\
\end{array}
\right] \\
 \left[
\begin{array}{ccc}
 -1 & -1 & 1 \\
 -1 & -1 & -1 \\
 1 & 1 & -1 \\
\end{array}
\right] & \left[
\begin{array}{ccc}
 -1 & 1 & 1 \\
 -1 & -1 & 1 \\
 1 & 1 & 1 \\
\end{array}
\right] & \left[
\begin{array}{ccc}
 1 & -1 & 1 \\
 -1 & -1 & 1 \\
 -1 & -1 & -1 \\
\end{array}
\right] \\
 \left[
\begin{array}{ccc}
 -1 & 1 & -1 \\
 1 & -1 & 1 \\
 -1 & -1 & -1 \\
\end{array}
\right] & \left[
\begin{array}{ccc}
 1 & 1 & -1 \\
 -1 & -1 & -1 \\
 -1 & 1 & -1 \\
\end{array}
\right] & \left[
\begin{array}{ccc}
 -1 & 1 & 1 \\
 1 & 1 & 1 \\
 -1 & 1 & -1 \\
\end{array}
\right] \\
\end{array}
\right]$$}
\end{example}

In 2017, Blake and Tirkel introduced a construction for multi-dimensional, block-circulant perfect autocorrelation 
arrays\cite{Blake2012}\cite{Blake2017b}. A special case of this construction is a two-dimensional perfect array, 
constructed from a circulant array\cite[const. XII, pp. 38]{Blake2017a}. \\

\begin{definition}
\cite{Blake2017a} Let ${\bf a} = \left[a_0, a_1, \cdots, a_{n-1}\right]$ and $\textbf{c} =
  \left[c_0,c_1,\cdots, c_{n-1}\right]$ be perfect sequences -- each of length
  $n$. We construct an array, \textbf{S}, such that $$\textbf{S} =
  \left[S_{i,j}\right] = a_j \,c_{i+j \mod n},$$ where $0 \leq i,j < n$.
\end{definition}

\bigskip

The sequence, \textbf{a}, is termed the \textit{multiplication sequence}.

\section{The multidimensional construction}

We now introduce a construction for families of $2n$-dimensional arrays. \\

\begin{definition}
Let $A$ be a $n$-dimensional Legendre array of size $p \times p \times\cdots\times p$, 
where $p$ is an odd prime. Then we construct a family of $p$, $2n$-dimensional arrays, 
$\textbf{S}_m$, for $0 \leq m < p$, where 
\begin{equation*}
\textbf{S}_m = \left[S_{i_0,i_1, \cdots, i_{2n-1}}\right]_m = 
	A_{i_0,i_1,\cdots,i_{n-1}} \, A_{m\,i_0 + i_n \mod p, m\,i_1 + i_{n+1} \mod p,\cdots, m\,i_{n-1} + i_{2n-1} \mod p},
\end{equation*}
where $0\leq i_0,i_1,\cdots,i_{2n-1}<p$.
\end{definition}

\bigskip

Similarities between this construction and the 2D circulant construction are clearly evident. In particular, 
the multiplication sequence $a_j$ and its counterpart the multiplication array $A_{i_0,i_1,\cdots,i_{n-1}}$; 
and the circulant array of columns $\,c_{i+j \mod n}$ and its counterpart 
$A_{m\,i_0 + i_n \mod p, m\,i_1 + i_{n+1} \mod p,\cdots, m\,i_{n-1} + i_{2n-1} \mod p}$.

\begin{theorem}
When $a$ (the first entry in the multidimensional Legendre array) is zero, the magnitude of 
the off-peak periodic autocorrelation of $\textbf{S}_m$ is bounded by $p^n-1$.
\end{theorem}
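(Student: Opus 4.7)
The plan is to show that the $2n$-dimensional autocorrelation of $\mathbf{S}_m$ factorises as a product of two ordinary autocorrelations of the underlying $n$-dimensional Legendre array $A$, and then bound each factor using the known flat-autocorrelation property with $a = 0$.

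First I would group the summation and shift indices into blocks of size $n$: write $\mathbf{i} = (i_0,\ldots,i_{n-1})$, $\mathbf{j} = (i_n,\ldots,i_{2n-1})$, and similarly $\mathbf{s}$, $\mathbf{t}$ for the shifts. Substituting the definition of $\mathbf{S}_m$ into the autocorrelation sum gives
\[
\theta_{\mathbf{S}_m}(\mathbf{s},\mathbf{t}) = \sum_{\mathbf{i},\mathbf{j}} A_{\mathbf{i}}\, A_{\mathbf{i}+\mathbf{s}}\, A_{m\mathbf{i}+\mathbf{j}}\, A_{m\mathbf{i}+\mathbf{j}+m\mathbf{s}+\mathbf{t}},
\]
with all subscripts reduced mod $p$. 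For each fixed $\mathbf{i}$, the substitution $\mathbf{k} = m\mathbf{i}+\mathbf{j} \pmod{p}$ is a bijection on $\{0,\ldots,p-1\}^n$, so the inner sum collapses to $\sum_{\mathbf{k}} A_{\mathbf{k}}\, A_{\mathbf{k}+m\mathbf{s}+\mathbf{t}} = \theta_A(m\mathbf{s}+\mathbf{t})$, which is independent of $\mathbf{i}$. Pulling it out of the sum over $\mathbf{i}$ yields the factorisation
\[
\theta_{\mathbf{S}_m}(\mathbf{s},\mathbf{t}) = \theta_A(\mathbf{s}) \cdot \theta_A(m\mathbf{s}+\mathbf{t}).
\]

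The proof concludes by case analysis on the two arguments. With $a = 0$, the flat autocorrelation property of $A$ gives $\theta_A(\mathbf{0}) = p^n - 1$ and $\theta_A(\mathbf{v}) = -1$ for every nonzero shift $\mathbf{v}$. For any off-peak $(\mathbf{s},\mathbf{t})$, having both $\mathbf{s} = \mathbf{0}$ and $m\mathbf{s}+\mathbf{t} = \mathbf{0}$ would force $\mathbf{t} = \mathbf{0}$ as well, which is excluded. So either exactly one factor equals $p^n - 1$ and the other is $-1$, producing $-(p^n-1)$, or both factors equal $-1$ and the product is $+1$. In either case $|\theta_{\mathbf{S}_m}(\mathbf{s},\mathbf{t})| \le p^n - 1$, as claimed.

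The only delicate step is the change of variables in the inner sum, and this is really where the construction earns its keep: the coupling between $\mathbf{i}$ and $\mathbf{j}$ is engineered to vanish after a pure translation, leaving a clean product of $A$-autocorrelations. I do not anticipate any further obstacle; the rest is a three-way case split on the arguments $\mathbf{s}$ and $m\mathbf{s}+\mathbf{t}$.
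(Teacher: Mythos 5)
Your proof is correct and follows essentially the same route as the paper: both separate out the sum over the second block of indices, recognise it (after a translation of indices) as $\theta_A(m\mathbf{s}+\mathbf{t})$ independent of $\mathbf{i}$, and finish with a case split on whether $\mathbf{s}$ and $m\mathbf{s}+\mathbf{t}$ vanish. Your explicit factorisation $\theta_{\mathbf{S}_m}(\mathbf{s},\mathbf{t})=\theta_A(\mathbf{s})\,\theta_A(m\mathbf{s}+\mathbf{t})$ is simply a cleaner packaging of the paper's case-by-case evaluation of the inner and outer sums.
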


\begin{proof}
The periodic autocorrelation of $\textbf{S}_m$ for an off-peak shift $s_0,s_1,\cdots,s_{2n-1}$ is given by 

\begin{align*}
&\theta_{\textbf{S}_m}\left(s_0,s_1,\cdots,s_{2n-1}\right)=\\
&\sum_{i_0=0}^{p-1}\cdots\sum_{i_{2n-1}=0}^{p-1}
		\left[S_{i_0, \cdots, i_{2n-1}}\right]_m
		\left[S_{i_0+s_0, \cdots, i_{2n-1}+s_{2n-1}}\right]_m^* \\
&= \sum_{i_0=0}^{p-1}\cdots\sum_{i_{2n-1}=0}^{p-1}
	A_{i_0,\cdots,i_{n-1}} \, A_{m\,i_0 + i_n \mod p,\cdots, m\,i_{n-1} + i_{2n-1} \mod p}\\
	&\qquad\times A_{i_0 + s_0,\cdots,i_{n-1} + s_{n-1}} \, A_{m\,i_0 + m\,s_0 + i_n + s_n \mod p,\cdots, m\,i_{n-1} + m\,s_{n-1} + i_{2n-1} + s_{2n-1} \mod p} \\
&=\sum_{i_0=0}^{p-1}\cdots\sum_{i_{n-1}=0}^{p-1}
	\bigg[ A_{i_0,\cdots,i_{n-1}} \, A_{i_0 + s_0,\cdots,i_{n-1} + s_{n-1}}
		\sum_{i_n=0}^{p-1}\cdots\sum_{i_{2n-1}=0}^{p-1}\\
		&\qquad A_{m\,i_0 + i_n \mod p,\cdots, m\,i_{n-1} + i_{2n-1} \mod p} \,
			A_{m\,i_0 + m\,s_0 + i_n + s_n \mod p,\cdots, m\,i_{n-1} + m\,s_{n-1} + i_{2n-1} + s_{2n-1} \mod p}\bigg].
\end{align*}
When at least one of $m\,s_0 + s_n, m\,s_1 + s_{n+1}, \cdots, m\,s_{n-1} + s_{2n-1} \neq 0 \mod p$, the 
innermost summation above is -1 (as \textbf{A} is a Legendre array), then 
\begin{align*}
\theta_{\textbf{S}_m}\left(s_0,s_1,\cdots,s_{2n-1}\right) &= 
	-\sum_{i_0=0}^{p-1}\cdots\sum_{i_{n-1}=0}^{p-1}
		A_{i_0,\cdots,i_{n-1}} \, A_{i_0 + s_0,\cdots,i_{n-1} + s_{n-1}} \\
&=
\begin{cases}
1 - p^n & s_0=s_1=\cdots=s_{n-1} = 0 \mod p \\
1 & \text{otherwise}
\end{cases}
\end{align*}
Otherwise, when $m\,s_0 + s_n = m\,s_1 + s_{n+1} = \cdots = m\,s_{n-1} + s_{2n-1} = 0 \mod p$, 
as $p$ is prime this implies $s_0 = s_1 =  \cdots = s_{n-1} \neq 0$, then 
\begin{equation*}
\theta_{\textbf{S}_m}\left(s_0,s_1,\cdots,s_{2n-1}\right) = 
	\left(p^n-1\right)\sum_{i_0=0}^{p-1}\cdots\sum_{i_{n-1}=0}^{p-1}
		A_{i_0,\cdots,i_{n-1}} \, A_{i_0 + s_0,\cdots,i_{n-1} + s_{n-1}} = 1 - p^n
\end{equation*}
and the magnitude of the bound on the autocorrelation is $p^n-1$.
\end{proof}

\begin{theorem}
When $a$ (the first entry in the multidimensional Legendre array) is zero, the magnitude 
of the periodic cross-correlation of any two distinct arrays 
$\textbf{S}_{m_1}$ and $\textbf{S}_{m_2}$ is bounded by $p^n+1$.
\end{theorem}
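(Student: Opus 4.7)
The plan is to follow the structure of the preceding autocorrelation proof almost verbatim. I will first expand $\theta_{\textbf{S}_{m_1},\textbf{S}_{m_2}}(s_0,\ldots,s_{2n-1})$ using the definition of $\textbf{S}_m$, then split the $2n$-fold summation into an outer sum over $(i_0,\ldots,i_{n-1})$ and an inner sum over $(i_n,\ldots,i_{2n-1})$. The outer factor is $A_{i_0,\ldots,i_{n-1}}\,A_{i_0+s_0,\ldots,i_{n-1}+s_{n-1}}$, exactly as in the autocorrelation case, while the inner sum now mixes the two multipliers $m_1$ and $m_2$.

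After the bijective change of variables $j_k = m_1 i_k + i_{n+k} \bmod p$, the inner sum is recognised as the autocorrelation $\theta_{\textbf{A}}(t_0,\ldots,t_{n-1})$ of the $n$-dimensional Legendre array at shift $t_k = (m_2-m_1)\,i_k + m_2 s_k + s_{n+k} \bmod p$. The key new ingredient is that $m_1 \neq m_2$ and $p$ is prime, so $m_2 - m_1$ is invertible in $\mathbb{Z}_p$: the linear system $t_0 = \cdots = t_{n-1} = 0$ has exactly one solution $i^* = (i_0^*,\ldots,i_{n-1}^*)$, given by $i_k^* = -(m_2-m_1)^{-1}(m_2 s_k + s_{n+k})$. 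By the Legendre array autocorrelation property with $a=0$, the inner sum therefore equals $p^n-1$ at this single outer tuple and $-1$ at every other one.

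Substituting back and collapsing the $-1$ terms into a single autocorrelation of $\textbf{A}$ yields, after one line of algebra,
\begin{equation*}
\theta_{\textbf{S}_{m_1},\textbf{S}_{m_2}}(s_0,\ldots,s_{2n-1}) \;=\; p^n\,A_{i_0^*,\ldots,i_{n-1}^*}\,A_{i_0^*+s_0,\ldots,i_{n-1}^*+s_{n-1}} \;-\; \theta_{\textbf{A}}(s_0,\ldots,s_{n-1}).
\end{equation*}
A short case analysis finishes: the leading product lies in $\{-1,0,+1\}$, while $\theta_{\textbf{A}}(s_0,\ldots,s_{n-1})$ is $p^n-1$ when $s_0=\cdots=s_{n-1}=0$ and $-1$ otherwise, so every cross-correlation value falls in $\{1-p^n,\,1,\,p^n+1\}$, giving $|\theta|\le p^n+1$.

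The main subtlety I expect is bookkeeping around the origin entry of $\textbf{A}$: it is precisely the fact that $\theta_{\textbf{A}}(0,\ldots,0) = p^n-1$ rather than $p^n$ (the single zero entry at the origin contributes nothing to the sum of squares) that produces the clean bound $p^n+1$ instead of a weaker $p^n+2$. Otherwise the argument is essentially the autocorrelation proof repeated with distinct multipliers, together with one extra appeal to the invertibility of $m_2-m_1$ to pin down the unique outer tuple $i^*$.
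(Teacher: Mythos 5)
Your proof is correct and takes essentially the same route as the paper: the same split into an outer sum over $(i_0,\ldots,i_{n-1})$ and an inner sum recognised as a shifted Legendre autocorrelation, with exactly one outer tuple receiving the peak value $p^n-1$ and all others receiving $-1$. Your version is in fact tighter than the paper's verbal term-counting, since the explicit appeal to the invertibility of $m_2-m_1$ modulo $p$ and the closed form $\theta_{\textbf{S}_{m_1},\textbf{S}_{m_2}} = p^n\,A_{i^*}\,A_{i^*+s} - \theta_{\textbf{A}}(s_0,\ldots,s_{n-1})$ pin down the exact value set $\{1-p^n,\,1,\,p^n+1\}$ observed in the paper's $p=3$, $n=2$ example.
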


\begin{proof}
The periodic cross-correlation of two distinct arrays $\textbf{S}_{m_1}$ and $\textbf{S}_{m_2}$ 
for shift $s_0,s_1,\cdots,s_{2n-1}$ is given by
\begin{align*}
&\theta_{\textbf{S}_{m_1},\textbf{S}_{m_2}}\left(s_0,s_1,\cdots,s_{2n-1}\right)=\\
&\sum_{i_0=0}^{p-1}\cdots\sum_{i_{2n-1}=0}^{p-1}
		\left[S_{i_0, \cdots, i_{2n-1}}\right]_{m_1}
		\left[S_{i_0+s_0, \cdots, i_{2n-1}+s_{2n-1}}\right]_{m_2}^* \\
&= \sum_{i_0=0}^{p-1}\cdots\sum_{i_{2n-1}=0}^{p-1}
	A_{i_0,\cdots,i_{n-1}} \, A_{m_1\,i_0 + i_n \mod p,\cdots, m_1\,i_{n-1} + i_{2n-1} \mod p}\\
	&\qquad\times A_{i_0 + s_0,\cdots,i_{n-1} + s_{n-1}} \, A_{m_2\,i_0 + m_2\,s_0 + i_n + s_n \mod p,\cdots, 
		m_2\,i_{n-1} + m_2\,s_{n-1} + i_{2n-1} + s_{2n-1} \mod p} \\
&=\sum_{i_0=0}^{p-1}\cdots\sum_{i_{n-1}=0}^{p-1}
	\bigg[ A_{i_0,\cdots,i_{n-1}} \, A_{i_0 + s_0,\cdots,i_{n-1} + s_{n-1}}
		\sum_{i_n=0}^{p-1}\cdots\sum_{i_{2n-1}=0}^{p-1}\\
		&\qquad A_{m_1\,i_0 + i_n \mod p,\cdots, m_1\,i_{n-1} + i_{2n-1} \mod p} \,
			A_{m_2\,i_0 + m_2\,s_0 + i_n + s_n \mod p,\cdots, m_2\,i_{n-1} + m_2\,s_{n-1} + i_{2n-1} + s_{2n-1} \mod p}\bigg].
\end{align*}


When at least one of $m_2\,s_0 + s_n, m_2\,s_1 + s_{n+1}, \cdots, m_2\,s_{n-1} + s_{2n-1} \neq 0 \mod p$, the 
inner-most summation is the cross-correlation of two shifted Legendre arrays, which is $p^n-1$ at the 
peak, and $-1$ otherwise. Then the outer-most summation is the autocorrelation of a Legendre array, 
with $p^n-2$ terms multiplied by $-1$ and one term multiplied by $p^n-1$. The bound occurs when $p^n-1$ 
is muliplied by $1$, and the inbalance of the remaining terms in the correlations is $-2$ multiplied by 
$-1$, then the bound on  $\theta_{\textbf{S}_{m_1},\textbf{S}_{m_2}}\left(s_0,s_1,\cdots,s_{2n-1}\right)$ is $p^n+1$. \\

Otherwise, when $m_2\,s_0 + s_n, m_2\,s_1 + s_{n+1}, \cdots, m_2\,s_{n-1} + s_{2n-1} = 0 \mod p$, as 
$p$ is prime this implies $s_n,s_{n+1}, \cdots, s_{2n-1} \neq 0 \mod p$. Then the inner-most 
summation is (as before) the cross-correlation of two shifted Legendre arrays. The outer-most 
summation is the autocorrelation of a Legendre array. At the peak of both summations, we have 
$\theta_{\textbf{S}_{m_1},\textbf{S}_{m_2}}\left(s_0,s_1,\cdots,s_{2n-1}\right) = (p^n-1) - (p^n-2) = 1$, 
otherwise at an off-peak shift of the inner-most summation we have 
$\theta_{\textbf{S}_{m_1},\textbf{S}_{m_2}}\left(s_0,s_1,\cdots,s_{2n-1}\right) = -1 - (p^n-2) = 1-p^n$. \\

Therefore $\theta_{\textbf{S}_{m_1},\textbf{S}_{m_2}}\left(s_0,s_1,\cdots,s_{2n-1}\right)$ is bounded 
by $p^n+1$.
\end{proof}

These arrays are asymptotically optimal in the sense of the Welch bound\cite{Welch1991}\cite{Yu2006}. Each 
array has $p^{2n} - 2p^n + 1$ non-zero entries. Then the cross-correlation bound to peak ratio is given 
by $(p^n+1)/(p^{2n} - 2p^n + 1)$ and is asymptotic to the Welch bound of $p^n/p^{2n}$. For example, 
for $p=67$, the relative difference is $1.5\times10^{-5}\%$. \\

\begin{example}\label{ex:3}
We illustrate the construction with the smallest possible example. Let $n = 2$ and $p = 3$, then 
$x^2+2x+2$ is a primitive polynomial in $\text{GF}\left(3^2\right)$, and we construct the 
$3 \times 3 \times 3 \times 3$ arrays, $\textbf{S}_1$ and $\textbf{S}_2$ and compute their
autocorrelations and cross-correlations. 
{\small
$$\textbf{S}_1 = \left[
\begin{array}{ccc}
 \left[
\begin{array}{ccc}
 0 & 0 & 0 \\
 0 & 0 & 0 \\
 0 & 0 & 0 \\
\end{array}
\right] & \left[
\begin{array}{ccc}
 1 & 0 & 1 \\
 1 & -1 & -1 \\
 -1 & -1 & 1 \\
\end{array}
\right] & \left[
\begin{array}{ccc}
 1 & 1 & 0 \\
 -1 & 1 & -1 \\
 1 & -1 & -1 \\
\end{array}
\right] \\
 \left[
\begin{array}{ccc}
 1 & -1 & 1 \\
 0 & -1 & -1 \\
 1 & 1 & -1 \\
\end{array}
\right] & \left[
\begin{array}{ccc}
 1 & 1 & -1 \\
 -1 & 0 & -1 \\
 -1 & 1 & 1 \\
\end{array}
\right] & \left[
\begin{array}{ccc}
 1 & -1 & -1 \\
 1 & 1 & 0 \\
 -1 & 1 & -1 \\
\end{array}
\right] \\
 \left[
\begin{array}{ccc}
 1 & 1 & -1 \\
 1 & -1 & 1 \\
 0 & -1 & -1 \\
\end{array}
\right] & \left[
\begin{array}{ccc}
 1 & -1 & -1 \\
 -1 & -1 & 1 \\
 1 & 0 & 1 \\
\end{array}
\right] & \left[
\begin{array}{ccc}
 1 & -1 & 1 \\
 -1 & 1 & 1 \\
 -1 & -1 & 0 \\
\end{array}
\right] \\
\end{array}
\right]$$}

{\small
$$\theta_{\textbf{S}_1} = \left[
\begin{array}{ccc}
 \left[
\begin{array}{ccc}
 64 & -8 & -8 \\
 -8 & -8 & -8 \\
 -8 & -8 & -8 \\
\end{array}
\right] & \left[
\begin{array}{ccc}
 1 & -8 & 1 \\
 1 & 1 & 1 \\
 1 & 1 & 1 \\
\end{array}
\right] & \left[
\begin{array}{ccc}
 1 & 1 & -8 \\
 1 & 1 & 1 \\
 1 & 1 & 1 \\
\end{array}
\right] \\
 \left[
\begin{array}{ccc}
 1 & 1 & 1 \\
 -8 & 1 & 1 \\
 1 & 1 & 1 \\
\end{array}
\right] & \left[
\begin{array}{ccc}
 1 & 1 & 1 \\
 1 & -8 & 1 \\
 1 & 1 & 1 \\
\end{array}
\right] & \left[
\begin{array}{ccc}
 1 & 1 & 1 \\
 1 & 1 & -8 \\
 1 & 1 & 1 \\
\end{array}
\right] \\
 \left[
\begin{array}{ccc}
 1 & 1 & 1 \\
 1 & 1 & 1 \\
 -8 & 1 & 1 \\
\end{array}
\right] & \left[
\begin{array}{ccc}
 1 & 1 & 1 \\
 1 & 1 & 1 \\
 1 & -8 & 1 \\
\end{array}
\right] & \left[
\begin{array}{ccc}
 1 & 1 & 1 \\
 1 & 1 & 1 \\
 1 & 1 & -8 \\
\end{array}
\right] \\
\end{array}
\right]$$
}

{\small
$$\textbf{S}_2 = \left[
\begin{array}{ccc}
 \left[
\begin{array}{ccc}
 0 & 0 & 0 \\
 0 & 0 & 0 \\
 0 & 0 & 0 \\
\end{array}
\right] & \left[
\begin{array}{ccc}
 1 & 1 & 0 \\
 -1 & 1 & -1 \\
 1 & -1 & -1 \\
\end{array}
\right] & \left[
\begin{array}{ccc}
 1 & 0 & 1 \\
 1 & -1 & -1 \\
 -1 & -1 & 1 \\
\end{array}
\right] \\
 \left[
\begin{array}{ccc}
 1 & 1 & -1 \\
 1 & -1 & 1 \\
 0 & -1 & -1 \\
\end{array}
\right] & \left[
\begin{array}{ccc}
 1 & -1 & 1 \\
 -1 & 1 & 1 \\
 -1 & -1 & 0 \\
\end{array}
\right] & \left[
\begin{array}{ccc}
 1 & -1 & -1 \\
 -1 & -1 & 1 \\
 1 & 0 & 1 \\
\end{array}
\right] \\
 \left[
\begin{array}{ccc}
 1 & -1 & 1 \\
 0 & -1 & -1 \\
 1 & 1 & -1 \\
\end{array}
\right] & \left[
\begin{array}{ccc}
 1 & -1 & -1 \\
 1 & 1 & 0 \\
 -1 & 1 & -1 \\
\end{array}
\right] & \left[
\begin{array}{ccc}
 1 & 1 & -1 \\
 -1 & 0 & -1 \\
 -1 & 1 & 1 \\
\end{array}
\right] \\
\end{array}
\right]$$
}

{\small 
$$\theta_{\textbf{S}_2} = 
\left[
\begin{array}{ccc}
 \left[
\begin{array}{ccc}
 64 & -8 & -8 \\
 -8 & -8 & -8 \\
 -8 & -8 & -8 \\
\end{array}
\right] & \left[
\begin{array}{ccc}
 1 & 1 & -8 \\
 1 & 1 & 1 \\
 1 & 1 & 1 \\
\end{array}
\right] & \left[
\begin{array}{ccc}
 1 & -8 & 1 \\
 1 & 1 & 1 \\
 1 & 1 & 1 \\
\end{array}
\right] \\
 \left[
\begin{array}{ccc}
 1 & 1 & 1 \\
 1 & 1 & 1 \\
 -8 & 1 & 1 \\
\end{array}
\right] & \left[
\begin{array}{ccc}
 1 & 1 & 1 \\
 1 & 1 & 1 \\
 1 & 1 & -8 \\
\end{array}
\right] & \left[
\begin{array}{ccc}
 1 & 1 & 1 \\
 1 & 1 & 1 \\
 1 & -8 & 1 \\
\end{array}
\right] \\
 \left[
\begin{array}{ccc}
 1 & 1 & 1 \\
 -8 & 1 & 1 \\
 1 & 1 & 1 \\
\end{array}
\right] & \left[
\begin{array}{ccc}
 1 & 1 & 1 \\
 1 & 1 & -8 \\
 1 & 1 & 1 \\
\end{array}
\right] & \left[
\begin{array}{ccc}
 1 & 1 & 1 \\
 1 & -8 & 1 \\
 1 & 1 & 1 \\
\end{array}
\right] \\
\end{array}
\right]
$$}

{\small
$$\theta_{\textbf{S}_1,\textbf{S}_2} = \left[
\begin{array}{ccc}
 \left[
\begin{array}{ccc}
 -8 & 1 & 1 \\
 1 & 1 & 1 \\
 1 & 1 & 1 \\
\end{array}
\right] & \left[
\begin{array}{ccc}
 10 & 1 & 1 \\
 -8 & -8 & 10 \\
 -8 & 10 & -8 \\
\end{array}
\right] & \left[
\begin{array}{ccc}
 10 & 1 & 1 \\
 -8 & -8 & 10 \\
 -8 & 10 & -8 \\
\end{array}
\right] \\
 \left[
\begin{array}{ccc}
 10 & -8 & -8 \\
 1 & 10 & -8 \\
 1 & -8 & 10 \\
\end{array}
\right] & \left[
\begin{array}{ccc}
 10 & -8 & -8 \\
 10 & 1 & -8 \\
 10 & -8 & 1 \\
\end{array}
\right] & \left[
\begin{array}{ccc}
 10 & 10 & 10 \\
 -8 & -8 & 1 \\
 -8 & 1 & -8 \\
\end{array}
\right] \\
 \left[
\begin{array}{ccc}
 10 & -8 & -8 \\
 1 & 10 & -8 \\
 1 & -8 & 10 \\
\end{array}
\right] & \left[
\begin{array}{ccc}
 10 & 10 & 10 \\
 -8 & -8 & 1 \\
 -8 & 1 & -8 \\
\end{array}
\right] & \left[
\begin{array}{ccc}
 10 & -8 & -8 \\
 10 & 1 & -8 \\
 10 & -8 & 1 \\
\end{array}
\right] \\
\end{array}
\right]$$}
\end{example}

\begin{example}
In the following graphic we plot the $7 \times 7 \times 7 \times 7$ array, $\textbf{S}_1$. 


\begin{figure}[H]
\centering
\includegraphics[width=0.6\textwidth]{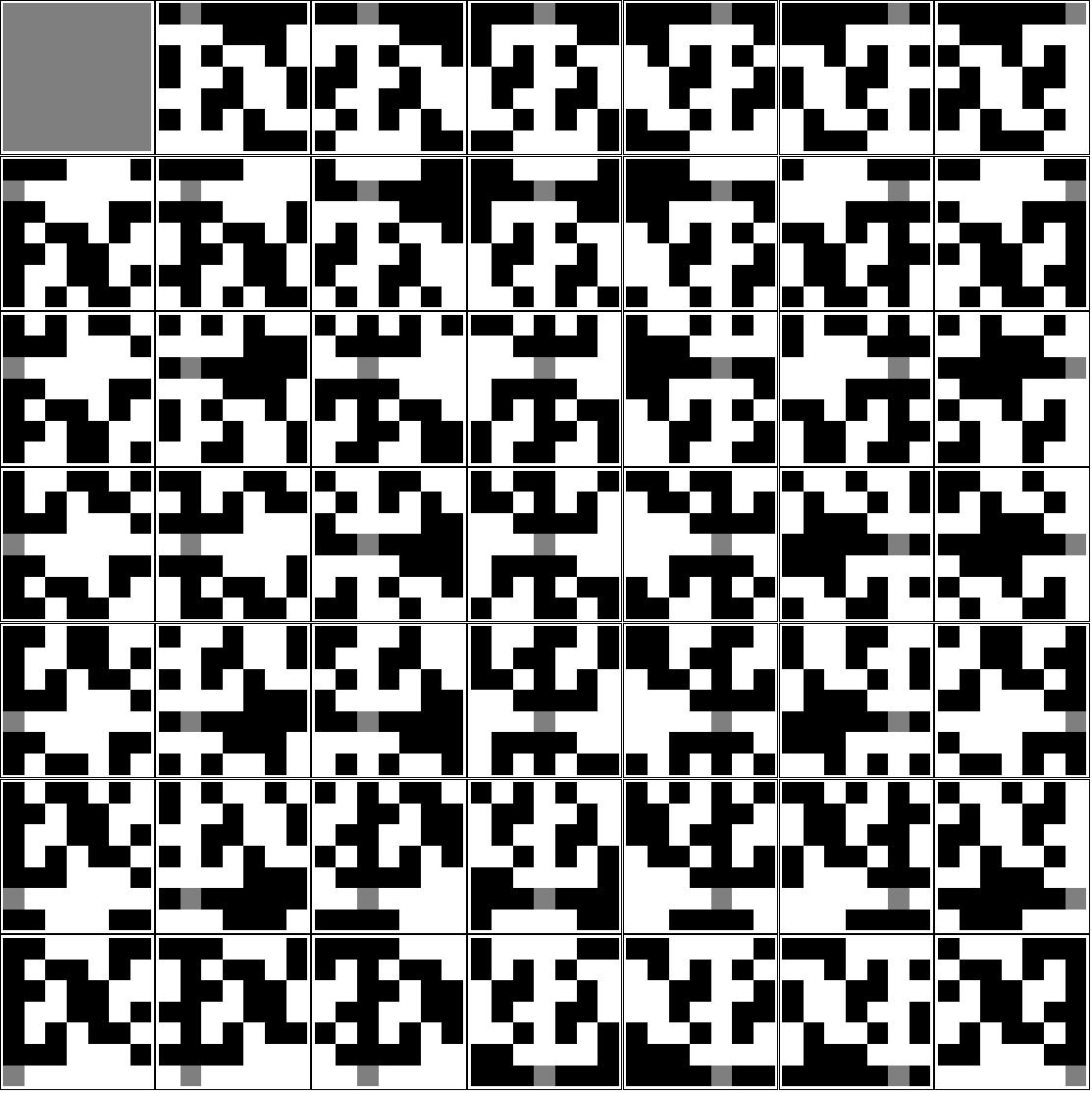}
\caption{A plot of $\textbf{S}_1$ for $p = 7$. ($-1$ is white, $+1$ is black and zero is gray.)}
\end{figure}
\end{example}

\section{Watermarking imagery with higher-dimensional arrays}

In this section we introduce a new application for higher-dimensional arrays with good 
autocorrelation and cross-correlation. We develop a new technique for embedding 
higher-dimensional arrays into imagery using spread spectrum watermarking 
techniques\cite{Tirkel1993}. \\

In the past, spread spectrum watermarking schemes used a sequence or array 
of dimensionality commensurate to the dimensionality of the dataset. Thus, a 
two-dimensional array is used to watermark an image; and every array embedded in 
the image supports a payload of two integers (the horizontal and vertical 
periodic shifts of the two-dimensional array). Then in order to support larger payloads, 
multiple arrays are embedded into the imagery. The embedding 
of multiple arrays relies on the cross-correlation properties of the families of 
arrays. However, even with families of optimal arrays, the signal-to-noise ratio (SNR) 
will decrease as more arrays are embedded. \\

We propose increasing the watermark payload, and subsequently increasing the SNR of the 
extraction, by increasing the dimensionality of the embedded arrays. We embed 
a $2n$-dimensional array into an image by partially flattening the array into a 
two-dimensional array. \\ 

We embed a $2n$-dimensional array, $\textbf{S} = \left[S_{i_0,i_1,\cdots,i_{2n-1}}\right]$ 
of size $d_0 \times d_1 \times\cdots\times d_{2n-1}$, 
into an image (2-dimensional dataset), $\textbf{I} = \left[I_{i,j}\right]$, by successively 
decreasing the dimensionality of $\textbf{S}$ from $2n$, $\textbf{S}^{2n}$ to $n$, 
$\textbf{S}^{n}$, until we have a 2-dimensional array, where
$$S^{n}_{i_0,i_1,\cdots,i_n} = S^{2n}_{q_0, q_1, \cdots, q_{n-1}, r_0, r_1,\cdots, r_{n-1}},$$
where $i_0 = q_0 d_0 + r_0, i_1 = q_1 d_1 + r_1, \cdots, i_{n-1} = q_{n-1} d_{n-1} + r_{n-1}$. \\

\begin{example}
We convert the $4$-D array, $\textbf{S}_1$, from Example \ref{ex:3} into a 2-D array. 
$$\left[
\begin{array}{ccccccccc}
 0 & 0 & 0 & 1 & 0 & 1 & 1 & 1 & 0 \\
 0 & 0 & 0 & 1 & -1 & -1 & -1 & 1 & -1 \\
 0 & 0 & 0 & -1 & -1 & 1 & 1 & -1 & -1 \\
 1 & -1 & 1 & 1 & 1 & -1 & 1 & -1 & -1 \\
 0 & -1 & -1 & -1 & 0 & -1 & 1 & 1 & 0 \\
 1 & 1 & -1 & -1 & 1 & 1 & -1 & 1 & -1 \\
 1 & 1 & -1 & 1 & -1 & -1 & 1 & -1 & 1 \\
 1 & -1 & 1 & -1 & -1 & 1 & -1 & 1 & 1 \\
 0 & -1 & -1 & 1 & 0 & 1 & -1 & -1 & 0 \\
\end{array}
\right]$$
\end{example}

Conversely, we partition the two-dimensional image into a 
$2n$-dimensional representation prior to cross-correlating with the family 
of $2n$-dimensional arrays. \\

This scheme embeds $2n$ integers (as periodic shifts of the array) for each array 
embedded, which greatly increases the payload per array in comparison with 
lower-dimensional arrays. 

\bibliographystyle{abbrv}

\end{document}